\newcommand{\set}{\; \mathsf{set}}
\newcommand{\propa}{\; \mathsf{prop}}
\newcommand{\props}{\; \mathsf{prop_s}}
\newcommand{\mltt}{Martin-Löf's type theory}
\begin{document}
\begin{frontmatter}
  \title{A Topological Counterpart of  Well-founded Trees in Dependent Type Theory
  }
  \author{Maria Emilia Maietti\thanksref{a}\thanksref{myemail}}	
   \author{Pietro Sabelli\thanksref{a}\thanksref{coemail}}		
   \thanks[myemail]{Email: \href{mailto:maietti@math.unipd.it} {\texttt{\normalshape
        maietti@math.unipd.it}}} 
  \address[a]{Department of Mathematics “Tullio Levi-Civita”\\University of Padua\\
    Padua, Italy} 
  \thanks[coemail]{Email:  \href{mailto:sabelli@math.unipd.it} {\texttt{\normalshape
        sabelli@math.unipd.it}}}
\begin{abstract} 
Within dependent type theory, we provide a topological counterpart of well-founded trees (for short, W-types) by using a proof-relevant version of the notion of inductively generated suplattices introduced in the context of formal topology
under the name of “inductively generated basic covers”. In more detail, we show, firstly, that in Homotopy Type Theory, W-types and proof-relevant inductively generated basic covers are propositionally mutually encodable. Secondly, we prove they are definitionally mutually encodable in the Agda implementation of intensional Martin-Löf’s type theory. Finally, we reframe the equivalence in the Minimalist Foundation framework by introducing  well-founded predicates as the logical counterpart for predicates of dependent W-types. All the results have been checked in the Agda proof-assistant.
\end{abstract}
\begin{keyword}
dependent type theory, formal topology, well-founded trees, W types, intensional type theory, homotopy type theory.
\end{keyword}
\end{frontmatter}
\section{Introduction}\label{intro}
It is well known that dependent type theories including Martin-Löf's type theory in \cite{PMTT} provide a foundational base both for a functional programming language (such as Haskell) and for constructive mathematics.

In this paper we show that a typical inductive data type
as that of well-founded trees has a topological counterpart in the point-free approach to topology, called formal topology.

Formal topology was introduced in \cite{S87} 
to develop topology in a constructive and predicative foundation such as Martin-Löf's type theory in \cite{PMTT}.
With respect to the usual notion of ``locale'' it 
employs a severe distinction between a set of basic opens and a collection or class (not generally a set!) of formal opens
which are defined as fix-points of a closure operator on the set of basic opens.

Then, the need of building predicative and constructive examples of formal topologies, including the point-free topology of Dedekind real numbers,
inspired the advent of powerful inductive methods of topological generation put forward in \cite{CSSV03}. Since then, it was clear that some kind of well-founded set constructor was enough to formalize such a topological induction in Martin-Löf's type theory as shown in detail in \cite{silviobar}. Moreover, it was also underlined that the main difficulty in generating inductive topologies
reduces essentially to that of
generating inductive suplattices, named \textit{inductively generated basic covers}, because the structure of inductive frame can be easily instantiated as a special case of  inductive suplattice as shown in \cite{CSSV03}
and extensively explained in \cite{batsampre,cms13}.

Recently, in \cite{mmr21}, the Curry-Howard representation of intuitionistic connectives and quantifiers as types has been extended by giving a proof-relevant presentation of inductively generated basic covers within a  two-level extension of the Minimalist Foundation \cite{m09}. Moreover, combining Th.4.9 of \cite{mmr21} with Th.5.3 of \cite{RG94} it follows that a version of Martin-Löf's type theory with W-types has the same proof-theoretic strength as the one with inductively generated basic covers. This led to the following question: \textit{can we establish
directly in some version of Martin-Löf's type theory an equivalence between W-types and proof-relevant inductively generated basic covers?}

Inspired by the results in \cite{Petersson} and \cite{whyw}, in this paper, we show that over intensional Martin-Löf’s type theory, W-types and proof-relevant inductive basic covers can encode one another provided that:

\begin{enumerate}
\item function extensionality holds;
\end{enumerate}
or, alternatively, that
\begin{enumerate}
\setcounter{enumi}{1}
\item definitional $\eta$-equalities for $\Pi$-types, $\Sigma$-types and the singleton type $\mathsf{N_1}$ hold.
\end{enumerate}
Notice that the first hypothesis is satisfied by Homotopy Type Theory (see \cite{hottbook}), and the required $\eta$-equalities in the second set of hypotheses are usually implemented by default in the Agda proof-assistant.

The result we prove will actually involve two other type constructors, namely, dependent W-types, a generalisation of W-types introduced in \cite{depwf} and \textit{well-founded predicates}, a new type constructor which is a logical version  of dependent W-types for predicates. 
Finally, we will discuss the meaning of those type constructors and their relationship in the Minimalist Foundation.

\paragraph{Structure of the paper} The paper is structured as follows: in Section 2, we present the two type theories we will work in and the formal notions of propositional and definitional encoding for the two type constructors; in Section 3, we recall some basic notions of formal topology, leading to the definition of inductive basic cover; in Section 4, we recall W-types and dependent W-types, showing that they are mutually encodable in some type theories; in Section 5, we introduce the type constructor of well-founded predicates, showing in what type theories it is mutually encodable with respect to both W-types and inductive basic covers.

\paragraph{Contributions and related works}
It is well known that W-types can encode inductive datatypes. This was first proved in \cite{DyPe}, and recently in \cite{whyw} without the use of function extensionality.
The fact that W-types can encode also inductive families, represented by \textit{dependent} W-types, was first proved in extensional type theory in \cite{Petersson} and then in a categorical setting in \cite{w-typelccc}; in Homotopy Type Theory, a similar result concerning a slightly different generalization of W-types, namely \textit{indexed} W-types, has also been proved (see \cite{nlab:inductive_family}).

In \cite{CSSV03} and \cite{silviobar}, it was observed that the inductive generation of basic covers can be seen as a particular instance of the dependent W-type constructor.

Our main contribution is to formally show that, under some mild extensional hypotheses, extending an intensional dependent type theory with inductive generation of basic covers is equivalent to extend it with well-founded trees. To make this statement precise, we introduce the notion of \textit{encoding} between two type constructors. Finally, we discuss what equivalences survive in the Minimalist Foundation.

The results of this paper concerning the encoding of constructors in extensions of Martin-Löf's type theory have been checked in Agda; the formalization is available at \href{http://github.com/PietroSabelli/W-DW-IBC}{the second author's GitHub Page}.

\section{Preliminaries}
In this paper we work with two different intensional type theories. Namely, an intensional Martin-Löf's type theory\, and the intensional level of the Minimalist Foundation. We briefly recall here both.
\paragraph{Martin-Löf's type theory} We consider a version of intensional Martin-Löf's type theory $\mathbf{MLTT_0}$ with the following type constructors: the empty type $\mathsf{N_0}$, the unit type $\mathsf{N_1}$, dependent sums $\Sigma$, dependent products $\Pi$, identity types $\mathsf{Id}$, disjoint sums $+$, and a universe of small types $\mathsf{U_0}$ á la Russell closed under all the above type constructors. Inductive type constructors are defined as to allow elimination toward all (small and large) types; this will be true also for the inductive types introduced in the subsequent sections; in particular, this feature is due in order to recursively define predicates on a type. The intensionality of the theory means that judgmental equality is not reflected by propositional equality; moreover, since we perform a fine-grained analysis on the use of equality, we do not assume either $\eta$-equalities or function extensionality in our base theory; instead they will appear as additional hypothesis in statements.

\paragraph{Minimalist Foundation} The Minimalist Foundation is a two-level foundation conceived in \cite{mtt} and finalized in \cite{m09} equipped with an  intensional level $\mathbf{mTT}$ and an extensional level $\mathbf{emTT}$ and an interpretation of the latter in a (quotient model) of the first. Both levels are fomulated as dependent type theory \`a la  Martin-Löf. Here, we mainly work within the intensional level $\mathbf{mTT}$, in which there are four kinds of types: small propositions, propositions, sets  and collections (denoted respectively $\mathsf{prop_s}$, $\mathsf{prop}$, $\mathsf{set}$ and $\mathsf{col}$). Small propositions are both propositions and sets, and every type is a collection, as depicted in the following square of inclusions:
\begin{center}
\begin{tikzcd}
\mathsf{prop} \arrow[r, hook]                   & \mathsf{col}                 \\
\mathsf{prop_s} \arrow[r, hook] \arrow[u, hook] & \mathsf{set} \arrow[u, hook]
\end{tikzcd}
\end{center}
This allows one to differentiate on the one hand logical and mathematical entities, and on the other different degrees of complexity (corresponding to the usual small/large type distinction in a Martin-Löf's type theory\, with a universe). Small propositions include the falsum constant $\bot$ and propositional equalities $\mathsf{Id}(A,a,b)$ of two terms $a$ and $b$ in the same set $A$, and are closed under connectives $\wedge$, $\lor$, $\Rightarrow$ and quantification $\exists$, $\forall$ over sets; propositions include all small propositions, all propositional equalities $\mathsf{Id}(A,a,b)$ (even when $A$ is not a set), and are closed under all connectives and quantifiers (again, without any restriction on the domain of quantification); sets include the empty set $\mathsf{N}_0$, the singleton set $\mathsf{N}_1$, all small propositions (identified with the sets of their proofs) and are closed under dependent sum $\Sigma$, dependent product $\Pi$, disjoint sum $+$ and the list constructor $\mathsf{List}$. Finally, each set and each proposition is a collection; moreover, collections include a universe á la Russell of small propositions $\mathsf{prop_s}$ and are closed under dependent sum.
A crucial characteristic of the Minimalist Foundation is the fact that elimination rules of propositional constructors act only for propositions; for this reason the axioms of choice or of unique choice (and their rules) are  not a theorem as it is in Martin-L\"of type theory, since in general one cannot produce witnesses for existential statements (see \cite{choice}).

\paragraph{Notation}
In both theories we adopt the following notational conventions:
\begin{list}{-}{}
\item we use the notation of sequents to express type-judgments  instead of the original notation of natural deduction in \cite{PMTT,m09};
\item since we work in an intensional setting, we interpret the type $\mathcal{P}(A)$ of subsets of a small type $A$ as a setoid whose carrier is the type of predicates over $A$ following \cite{m09}, namely, the function space from $A$ to the universe of small propositions, rendered as the large type $A \to \mathsf{U_0}$ in Martin-Löf's type theory, and as the collection $A \to \mathsf{prop_s}$ in the Minimalist Foundation;
\item in a type-theoretical framework we need to be careful in distinguishing between typehood judgment, denoted here with the semicolon notation $a : A$, and the (propositional) relation of membership, denoted with the usual set-theoretical membership symbol $a \varepsilon V$, when $a : A$ and $V : \mathcal{P}(A)$; the same notation will be used when working with the intensional representation of subsets mentioned in the previous point;
\item when writing inference rules, the piece of context common to all the judgments appearing in an inference rule is omitted; moreover, when we give the rules of a type constructor, we interpret the formation rule's premises as parameters of the constructor; we then take them for granted in the premises of the other constructor's rules;
    \item we reserve the symbol  $\to$ \textit{(resp. $\times$)} as a shorthand for a non-dependent function space \textit{(resp. for non-dependent product spaces)}; moreover, when working in the Minimalist Foundation, we denote the implication connective with the arrow symbol $\Rightarrow$;
    \item when writing lambda abstractions, we omit to write the type of the abstraction;
    \item we write $a =_A b$ – or just $a = b$ when the type $A$ can be easily inferred from the context – as a shorthand for $\mathsf{Id}(A,a,b)$;
    \item we write $f(a)$ as a shorthand for $\mathsf{Ap}(f,a)$ \textit{(resp. $\mathsf{Ap_\forall}(f,a)$, $\mathsf{Ap_\Rightarrow}(f,a)$)} when $a : A$ and $f : (\Pi x : A)B(x)$ \textit{(resp. $f : (\forall x : A)B(x)$, $f : A \Rightarrow B$)}.
\end{list}

\paragraph{Encodings}

In the contexts of extensions of Martin-Löf's type theory and of the intensional level of the Minimalist Foundation, we define what it means for a type constructor to \textit{encode} another type constructor.

\begin{definition}
Let $\mathbf{T}$ be an extension of either $\mathbf{mTT}$, $\mathbf{emTT}$ or $\mathbf{MLTT_0}$, and let $\mathsf{C}$ and $\mathsf{D}$ be two type constructors of $\mathbf{T}$, intended as their corresponding sets of rules. We say that $\mathsf{C}$ \textit{definitionally encodes $\mathsf{D}$ in $\mathbf{T}$} if each new symbol appearing in $\mathsf{D}$ can be interpreted in $\mathbf{T} + \mathsf{C}$ in such a way that all the rules of $\mathsf{D}$ are valid under this interpretation. If $\mathbf{T}$ is an extension of $\mathbf{MLTT_0}$, we say that $\mathsf{C}$ \textit{propositionally encodes $\mathsf{D}$ in $\mathbf{T}$} if for each $A$ such that
$\mathbf{T} + \mathsf{D}$ derives $A \; \mathsf{type}$, there exist $B$ and $p$ such that $\mathbf{T} + \mathsf{C}$ derives $B \; \mathsf{type}$ and $\mathbf{T} + \mathsf{C} + \mathsf{D}$ derives $p : A \cong B$, where $A \cong B$ means $$(\Sigma f : A \to B)(\Sigma g : B \to A)((\Pi x : A)(g(f(x))=_A x) \times (\Pi x : B)(f(g(x))=_B x))$$

Finally, if it happens that in $\mathbf{T}$ both $\mathsf{C}$ definitionally \textit{(resp. propositionally)} encodes $\mathsf{D}$, and $\mathsf{D}$ definitionally \textit{(resp. propositionally)} encodes $\mathsf{C}$, we say that $\mathsf{C}$ and $\mathsf{D}$ are \textit{definitionally} (resp. propositionally) \textit{mutually encodable}.
\end{definition}

As already mentioned, we will consider extending a theory with the following axioms:
\begin{enumerate}
\item
function extensionality
\[
\frac
{
f : (\Pi x : A)B(x)
\quad
g : (\Pi x : A)B(x)
\quad
p : (\Pi x : A)(f(x) =_{B(x)} g(x))
}
{
\mathsf{funext}(f,g,p) : f =_{(\Pi x : A)B(x)} g
}    
\]
\item
$\eta$-equality for $\Pi$-types
\[
\frac
{
f : (\Pi x : A)B(x)
}
{
f = \lambda x . f(x) : (\Pi x : A)B(x)
}    
\]
\item
$\eta$-equality for $\Sigma$-types
\[
\frac
{
z : (\Sigma x : A)B(x)
}
{
z = \langle \mathsf{pr_1}(z) , \mathsf{pr_2}(z) \rangle : (\Sigma x : A)B(x)
}    
\]
\item
$\eta$-equality for $\mathsf{N_1}$
\[
\frac
{
z : \mathsf{N_1}
}
{
z = \star : \mathsf{N_1}
}    
\]
\end{enumerate}

\section{Inductive basic covers}
The name \say{Formal Topology} refers both to the study of topology in a constructive and predicative setting and to its main object of investigation: a point-free notion of topology whose definition, contrary to the classical one of topological space, avoids impredicative uses of the powerset. The core component of a formal topology is its underlying \textit{basic cover}, a predicative presentation of the topology's suplattice of open sets. We recall here its definition as originally appeared in \cite{batsampre}.

\begin{definition}
A \textit{basic cover relation} on a set $A$ is a binary relation $a \vartriangleleft V$ between elements $a : A$ and subsets $V : \mathcal{P}(A)$, such that the following two properties hold:
\begin{enumerate}
    \item \textit{(reflexivity)} if $a \varepsilon V$, then $a \vartriangleleft V$;
    \item \textit{(transitivity)} if $a \vartriangleleft U$, and $u \vartriangleleft V$ for each $u \varepsilon U$, then $a \vartriangleleft V$.
\end{enumerate}
\end{definition}

In this paper, we are specifically interested in a subclass of basic covers; namely the \textit{inductively generated} ones. The inductive generation of basic covers was devised by the authors in \cite{CSSV03} to have a convenient way for constructing formal topologies. Moreover, their method enjoys some desirable properties, such as the possibility of forming the product topology of two inductively generated formal topologies, which in general does not appear definable predicatively. We recall the notion of inductively generated basic cover, starting with the definition of axiom set.

\begin{definition}
An \textit{axiom set} consists of:
\begin{enumerate}
    \item a set $A$;
    \item an $A$-indexed family of sets $a: A\vdash I(a) \set\ $;
    \item for each $a : A$, an $I(a)$-indexed family of $A$'s subsets $  a : A,\, i : I(a)\vdash C(a,i) : \mathcal{P}(A)$.
\end{enumerate}
\end{definition}
The subsets family $C$ in the definition above is to be understood as a collection of axioms (hence the name axiom set) of the form $a \vartriangleleft C(a,i)$ for each $a : A$ and each $i : I(a)$. Given an axiom set, the basic cover inductively generated by it is the smallest basic cover that satisfies those axioms, formally:
\begin{definition}
A basic cover \textit{inductively generated} by an axiom set $A,I,C$ is a basic cover $\vartriangleleft$ on $A$ such that:
\begin{enumerate}
    \item $a \vartriangleleft C(a,i)$ holds for each $a : A$ and $i : I(a)$;
    \item if $\vartriangleleft'$ is another basic cover such that $a \vartriangleleft' C(a,i)$ holds for each $a : A$ and $i : I(a)$, then $a \vartriangleleft V$ implies $a \vartriangleleft' V$ for each $a : A$ and $V : \mathcal{P}(A)$.
\end{enumerate}
If a basic cover happens to be inductively generated by some axiom set, we say that it is an \textit{inductive basic cover}.
\end{definition}
Given an axiom set $A,I,C$, it is always possible to construct the basic cover inductively generated by it. It is the relation $\vartriangleleft_{I,C}$ obtained using the following generating rules:
\begin{enumerate}
    \item \textit{(reflexivity)} if $a \varepsilon V$, then $a \vartriangleleft_{I,C} V$;
    \item \textit{(infinity)} if, for some $a : A$ and $i : I(a)$, it holds that $b \vartriangleleft_{I,C} V$ for each $b \varepsilon C(a,i)$, then $a \vartriangleleft_{I,C} V$.
\end{enumerate}

In \cite{mmr21}, a new type constructor was added both in the Minimalist Foundation (\ref{ibc}) and, following the Curry-Howard interpretation, in \mltt\, (\ref{ibcMLTT}) for interpreting  the above generating method at an intensional level; there, examples of inductively basic covers were also defined, including the inductive topology of Dedekind real numbers and the inductive topology of the Cantor or Baire space. The type constructor follows the usual scheme for inductive types and in the following sections we will investigate how it compares to W-types and their dependent versions.
\section{Well-founded trees constructors}
The W-type constructor (also known as the type of well-orders or well-founded trees) was present in the first versions of \mltt\, (see \cite{ML84}), where it was used to constructively represent ordinals. Its rules in \mltt\, are listed in \ref{wMLTT}. For a set $A$ and an $A$-indexed family of sets $B$, the W-type $\mathsf{W}_{A,B}$ is generally understood set-theoretically as the set of (possibly infinitary) well-founded trees with nodes labelled by elements of a set $A$ and with a branching function given by $B$.

One of the main reasons for the importance of W-types is their ability, first observed by Dybjer in \cite{DyPe}, to encode common inductive types such as natural numbers and lists. Our goal is to show that this is also the case for the (Curry-Howard presentation of) inductive basic covers. 
However, as soon as we confront the W-type constructor with that of inductive basic covers, we discover a substantial issue: the former produces just a set, while the latter produces a predicate, hence, by the propositions-as-types paradigm, a family of sets. This same limitation of the W-type constructor has been already addressed in \cite{Petersson}, where they proposed a generalization of W-types, called \textit{dependent W-types} (also known as \textit{indexed W-types}, or simply as \textit{trees}), capable of constructing \textit{families of mutually} inductive sets. This additional expressivity allows for example the construction of W-types in the setoid model without having to eliminate towards the universe, as shown in \cite{emmenegger}.

The rules of dependent W-types in \mltt\, are listed in \ref{dwMLTT}. Also dependent W-types can be interpreted as sets of well-founded trees, but their structure is more complex. The nodes of a dependent W-type are labelled by elements of a set $I$ as in the non-dependent case, however there is not just one branching associated to each label; instead, each label $i : I$ allows between a set $N(i)$ of possible branchings. Actually, an element $n : N(i)$ is just a label for the branching, the branching itself being given by a sets family $Br(i,n)$. Lastly, the trees of a dependent W-type satisfy a further property: the roots of each of its directed subtrees must be labelled according to an \textit{arity} function $ar(i,n) : Br(i,n) \to I$. We write a dependent W-type in symbol as $\mathsf{DW}_{Br,ar}$, omitting  the parameters $I$ and $N$ for readability.

Of course, W-types are a particular instance of dependent W-types. This can be formally seen by setting the parameter $I$ to be the unit type $\mathsf{N_1}$. Actually, and perhaps surprisingly, also the converse is true, namely dependent W-types can be reduced to non-dependent ones. In the following proposition we spell out our result, which is essentially a recasting in an intensional setting of the ideas contained in \cite{Petersson} and \cite{w-typelccc}.

\begin{proposition}\label{W-DW}

\begin{enumerate}
\item W-types propositionally encode dependent W-types in $\mathbf{MLTT_0}$ extended with function extensionality;
\item W-types definitionally encode dependent W-types in $\mathbf{MLTT_0}$ extended with $\eta$-equalities for $\Pi$-types and $\Sigma$-types.
\end{enumerate}

\end{proposition}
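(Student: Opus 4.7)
The plan is to adapt the classical construction of Petersson--Synek \cite{Petersson} to the intensional setting. Given parameters $I, N, Br, ar$ for building $\mathsf{DW}_{Br,ar}$, I first construct an auxiliary non-dependent W-type $\mathsf{W}_{A,B}$ with labels $A := (\Sigma i : I)\, N(i)$ and branching $B(\langle j, n \rangle) := Br(j, n)$; its elements are trees whose nodes carry a pair (index, constructor name) but need not respect the arity constraint. Next, by W-elimination into the universe, I define a validity predicate $V : \mathsf{W}_{A,B} \to I \to \mathsf{U_0}$ by the clause
$$V(\mathsf{sup}(\langle j, n\rangle, f),\, i) := (j =_I i) \times (\Pi b : Br(j, n))\, V(f(b),\, ar(j, n, b))$$
and finally set $\mathsf{DW}'(i) := (\Sigma w : \mathsf{W}_{A,B})\, V(w, i)$.

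For part (1), the propositional encoding, I would prove $\mathsf{DW}'(i) \cong \mathsf{DW}_{Br,ar}(i)$ by producing maps in both directions. The forward map uses W-recursion on the first component, paired with the validity witness, transporting along the identity proof stored in $V$ to realign indices with the $ar$ function. The backward map is defined by dependent W-elimination, sending a dependent W-tree at index $i$ to its underlying pre-tree paired with the canonical validity proof built from reflexivity and the inductive hypothesis. The two round-trip identities are then proved by induction on the respective tree structures; function extensionality enters precisely to identify branching functions pointwise and to contract any two validity witnesses of a given pre-tree.

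For part (2), the definitional encoding, I would validate each rule of the dependent W-type constructor judgmentally on $\mathsf{DW}'$, interpreting $\mathsf{sup}_{Br,ar}$ as the operation that packages its pre-tree together with a $\Sigma$-pair of $\mathsf{id}(i)$ and the induction hypothesis. Formation and introduction are then immediate. The eliminator is defined by W-elimination on the first component, using $\Sigma$-$\eta$ to split the validity datum and $\mathsf{Id}$-elimination to discharge its identity component. The main obstacle, and the reason the extensional axioms are needed, is making the computation rule hold definitionally: $\Sigma$-$\eta$ is required to reassemble the destructured validity witness back to the original pair, while $\Pi$-$\eta$ is required to identify the lambda-abstraction of a recursive call with the branching function originally supplied to $\mathsf{sup}_{Br,ar}$. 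Without these, only a propositional identification survives, which is exactly why the proposition splits into the two cases stated.
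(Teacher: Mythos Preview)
Your proposal is correct and follows essentially the same construction as the paper: the auxiliary W-type on $(\Sigma i:I)N(i)$ with branching $Br$, the recursively defined validity predicate carrying an identity proof and a $\Pi$-indexed inductive hypothesis, and the $\Sigma$-type encoding all coincide with the paper's $\mathsf{Free}$, $\mathsf{Legal}$, and $\mathsf{DW}'$ (up to the cosmetic choice of the order of the product factors and the orientation $j =_I i$ versus $i =_I j$). Your account of where function extensionality and the $\eta$-equalities are invoked also matches the paper's.
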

\begin{proof}
Assume to have the parameters
\begin{align*}
& I  : \mathsf{U_0} \\
& i : I \vdash N(i)  : \mathsf{U_0} \\
& i : I \,, n : N(i) \vdash Br(i,n)  : \mathsf{U_0} \\
& i : I \,, n : N(i) \vdash ar(i,n) : Br(i,n) \to I
\end{align*}
as in the premises of the formation rule of dependent W-types. Firstly, we construct a W-type $\mathsf{Free}$ of well-founded trees whose nodes are labelled by dependent pairs in $(\Sigma i : I)N(i)$ and whose branching function is given by the family $Br$ applied to the two projections. Formally, we are constructing the set
\[
\mathsf{Free} :\equiv \mathsf{W}_{(\Sigma i : I) N(i),Br(\mathsf{pr_1}(z),\mathsf{pr_2}(z))}
\]
This means that $\mathsf{Free}$ trees' nodes contain information both on how the nodes of the dependent W-type trees we are trying to simulate are labelled and on the branching options chosen for those nodes; all while respecting the same branching function.

Secondly, we impose that each node's label is in accordance with the arity parameter $ar$ by thinning out the set of $\mathsf{Free}$ trees with a $I$-indexed family of predicates $i : I \vdash \mathsf{Legal}(i): \mathsf{Free} \to \mathsf{U_0}$ which assert, for a $\mathsf{Free}$ tree, that its root's label has $i$ as its first component, and that each other node's first component is given by the arity function applied to its parent's labels and branch. The predicate is formally defined by recursion for any $i: I$, $ j: I$, $n: N(j)$ and $f: 
Br(j,n)\rightarrow \mathsf{Free}$ in the following way:
\[
\mathsf{Legal}(i, \mathsf{sup}(\langle j, n\rangle,f)):\, \equiv \,  (\Pi b : Br(j,n))\mathsf{Legal}(ar(j,n,b),f(b))\ \times\  (i =_I j)
\]
Note that, being $i$ fixed, the identity type $i =_I j$ has a unique proof propositionally, and hence it is contractible type according to \cite{hottbook}.

Then, our candidate for encoding the dependent W-type is the type family \[i : I \vdash \mathsf{DW'}(i) :\equiv (\Sigma w : \mathsf{Free})\mathsf{Legal}(i,w)\]
Now suppose to have $i : I, n : N(i)$ and $f : (\Pi b : Br(i,n))\mathsf{DW'}(ar(i,n,b))$, then we can straightforwardly define a constructor term of $\mathsf{DW'}(i)$ in the following way:
\[
\mathsf{dsup}'(i,n,f) :\equiv \langle \; \mathsf{sup}(\langle i , n\rangle, \lambda b. \mathsf{pr_1}(f(b))) \;,\;
                \langle \,    \lambda b .\mathsf{pr_2}(f(b))  \, , \mathsf{id}(i)\,\rangle \; \rangle
\]

For \textit{(i)}, we can now define by recursion a pair of functions $g_i : \mathsf{DW}_{Br,ar}(i) \to \mathsf{DW}'(i)$ and

$g^{-1}_i : \mathsf{DW}'(i) \to \mathsf{DW}_{Br,ar}(i)$ for each $i:I$ in the following way
\begin{align*}
g_i(\mathsf{dsup}(i,n,f)):\equiv & \;\mathsf{dsup'}(i, n, \lambda b . g_i(f(b))) \\
g^{-1}_i(\langle\mathsf{sup} (\langle i , n \rangle , f) , \langle l , \mathsf{id}(i) \rangle\rangle):\equiv & \;\mathsf{dsup}(i, n, \lambda b. g^{-1}_i(\langle f(b), l(b) \rangle))
\end{align*}

Using function extensionality, we can then check by induction that they are reciprocally inverses. 

For \textit{(ii)}, we have already derived the formation and introduction rules of dependent W-types with $\mathsf{DW'}$ and $\mathsf{dsup'}$. Elimination and computation rules are a bit more involved, although they are not conceptually harder. Suppose to have, as in the premises of the elimination rule, a type family $i : I, w : \mathsf{DW}'(i) \vdash M(i,w)$ and a dependent term
\begin{align*}
& i : I , \\
& \quad n : N(i) , \\
& \quad\quad f : (\Pi b : Br(i,n))\mathsf{DW}'(ar(i,n,b)) , \\
& \quad\quad\quad h : (\Pi b : Br(i,n))M(ar(i,n,b),f(b)) \\
& \quad\quad\quad\quad\vdash d(i,n,f,h) : M(i,\mathsf{dsup}'(i,n,f))
\end{align*}
For the elimination rule we want to define a dependent term
\[
i : I,w:\mathsf{DW}'(i) \vdash \mathsf{El'_{DW}}(i,w,d) : M(i,w)
\]
satisfying  the following definitional equality to validate the computational rule
\begin{align*}
& i : I \\
& \quad n: N(i) \\ 
& \quad\quad f:(\Pi b:Br(i,n))\mathsf{DW}'(ar(i,n,b)) \\
& \quad\quad\quad \vdash \mathsf{El'_{DW}}(i,\mathsf{dsup}'(i,n,f),d) = d(i,n,f,\lambda b.\mathsf{El'_{DW}}(ar(i,n,f),f(b),d))
\end{align*}
The idea is to define the term $\mathsf{El'_{DW}}$ by recursion by mimicking the above requirement – so that the task of checking the computation rule will turn out to be trivial. The definition explicitly reads
\[
\mathsf{El'_{DW}}(i, \langle\mathsf{sup}(\langle i,n\rangle, f) , l , \mathsf{id}(i)\rangle, d) :\equiv d\;(\;i\;,\;n\;,\;\lambda b . \langle f(b) , l(b) \rangle\;,\;\lambda b.\mathsf{El'_{DW}}(ar(i,n, b), \langle f(b) , l(b) \rangle)\;)
\]
Above, we used the recursion principles of $\Sigma$-types, W-types and identity types, all at once. The long-but-routine calculations lie in checking that the given definition is well-typed by formulating it only with eliminator terms. In particular, it is in this step of defining $\mathsf{El'_{DW}}$ by recursion that $\eta$-equalities are needed to ensure that the calculations go through. We leave them to the assiduous reader or to the proof-checker.

\end{proof}

Taking advantage of the previous result, we will show the equivalence between W-types and inductive basic covers by proving, on the one hand, that dependent W-types can encode them, and, on the other, that (non-dependent) W-types can be encoded by them. However, we will not prove these two facts directly; instead, in the next section, we will introduce a new type constructor, called \textit{well-founded predicate}, to use as a bridge between (dependent) W-types and inductive basic covers. This intermediate step would not be strictly necessary for \mltt; nevertheless, aside from providing a clearer proof, its introduction is vital when we wish to keep apart the notions of set and proposition as it is done in the Minimalist Foundation. 

We close this section with a remark on the behaviour of W-types in the Minimalist Foundation which provides another, more technical reason why it is not convenient to work with them in such a framework.
\begin{remark}\label{now}
We know that in Martin-Löf's type theory, without a universe of sets, it is not possible to construct a family $x : A \vdash B(x) \set$ for which there exist $a, a' : A$ such that $B(a)$ is inhabited and $B(a')$ is (isomorphic to) the empty type. In the Minimalist Foundation, in which there is only a universe of (small) propositions, the same phenomenon occurs, with an additional subtlety: in fact, we can construct a non-always-inhabited, non-always-empty family of sets, e.g. $b : \mathsf{N_1} + \mathsf{N_1} \vdash \mathsf{El}_+(b,\bot,\top)$; however, no set of such a family seems to be provable isomorphic to the empty set $\mathsf{N_0}$, the main reason being that the falsum constant $\bot$ cannot eliminate towards sets. Consequently, in the Minimalist Foundation we cannot start the construction of a canonical element for a W-type (aside for those with a constantly-null branching function) because, according to the W-type introduction rule, that would require to construct a function towards the W-type itself, which is again a set.

Nonetheless, a suitable extension of the Minimalist Foundation (e.g. one equipped with a universe of sets) could make W-types (and their dependent version) actually usable; 
a result analogous to point \textit{(ii)} of \ref{W-DW}  can be then shown for the intensional level of the Minimalist Foundation.
\end{remark}

\section{Well-founded predicates}
As mentioned in the previous section, the choice of representing predicates (such as inductive basic covers) through sets (such as well-founded trees) would not be consistent with the Minimalist Foundation's philosophy, which stipulates a strict distinction between propositions and sets. This brings us to introduce a new propositional constructor called \textit{well-founded predicate} to the Minimalist Foundation that allows for the inductive definition of predicates.

This new constructor stems as a logical counterpart for predicates of the notion of dependent well-founded tree, hence the name well-founded predicate. Analogously, a well-founded predicate can be interpreted as the set of \textit{proof} trees built using certain derivation rules defined by its parameters in the following way: $I$ is the set on which the well-founded predicate acts; for each element $i : I$, $N(i)$ is a set of names for inference rules whose conclusions state that the well-founded predicate holds for $i$; the premises of each rule $n : N(i)$ are (possibly infinite) instances of the well-founded predicate applied to some elements of $I$, and the predicate $R(i,n) : \mathcal{P}(I)$ tells precisely which ones. Schematically, we have the following entailments written in the language of the extensional level of the Minimalist Foundation:
\[
i : I , n : N(i) \vdash (\forall j \varepsilon R(i,n))\ \mathsf{WP}_R(j) \ \Rightarrow \ \mathsf{WP}_R(i)
\]
In particular, if $R(i,n)$ is the empty subset of $I$, the above entailment is interpreted as an axiom stating that the predicate holds for $i$; from there, the construction of a derivation can start (notice that we do not encounter the same problem for well-founded trees pointed out in \ref{now}, since now the well-founded predicate is a proposition towards which the falsum constant can eliminate).

The following representation property wraps up the above interpretation as a provable statement in the extensional level of the Minimalist Foundation:
$$
i : I \vdash \mathsf{WP}_{R}(i)\ \Leftrightarrow\ 
(\exists n : N(i))(\forall j \varepsilon R(i,n))\ \mathsf{WP}_{R}(j)
$$
It explicitly reads: \say{the well-founded predicate holds for $i$ if and only if there exists $n : N(i)$ such that all the premises $R(i,n)$ of the rules are satisfied by the well-founded predicate}.

In order to interpret this extensional predicate in the intensional level in a similar way to what is done in Proposition 2.7 of \cite{mmr21}  for the inductive cover predicate, it is enough to extend the intensional level of the Minimalist Foundation with the rules of the W-predicate constructors in the appendix \ref{wp}.

Of course, the motivating example for introducing well-founded predicates is the inductive generation of basic covers. To formally see this, suppose to have an axiom set $A,I,C$ and a predicate $V : A \to \mathsf{prop_s}$; we can express the predicate $a \vartriangleleft_{I,C} V$ as a well-founded predicate over $A$ constructed using the following parameters:
\begin{align*}
N(a) &:\equiv V(a)+I(a) \\
R(a,n) &:\equiv
\begin{cases}
\emptyset & \text{if } n : V(a) \\
C(a,i) & \text{if } n : I(a)
\end{cases}
\end{align*}

If we also have at our disposal the basic inductive cover constructor, we can then form and prove in the theory the statement $a : A \vdash a \vartriangleleft_{I,C} V \Leftrightarrow \mathsf{WP}_{R}(a)$ asserting the equivalence between the two predicates. 

Actually, the basic inductive covers are a complete example of a well-founded predicate, in the sense that  their scope encompasses all the possibilities of the constructor. This is perhaps not surprising since, although interpreted differently, the rules of the former resemble very closely the ones of the latter, the only difference being the additional  canonical elements introduced by reflexivity.  To formally see this, suppose to have $I,N,R$ parameters for the well-founded predicates constructor; the statement $i : I \vdash \mathsf{WP}_{R}(i) \Leftrightarrow i \vartriangleleft_{N,R} \emptyset$ is then provable in the theory.

Of course, as for inductive basic covers, well-founded predicates can be defined also in \mltt\, by defining the elimination rule to act towards any dependent type (we spell out the rules in \ref{wpMLTT}), according to the propositions-as-types correspondence. Then, the previous results on the relationship between inductive basic covers and well-founded predicates based on equiprovability upgrade to the following statement.

\begin{proposition}\label{WP-IBC}

\begin{enumerate}
\item well-founded predicates are propositionally mutually encodable with inductive basic covers in $\mathbf{MLTT_0}$ extended with function extensionality;
\item well-founded predicates are definitionally mutually encodable inductive basic covers in $\mathbf{MLTT_0}$ and $\mathbf{mTT}$, both extended with $\eta$-equalities for $\Pi$-types and $\Sigma$-types, and also in  $\mathbf{emTT}$.
\end{enumerate}
\end{proposition}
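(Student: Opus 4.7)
The plan is to make precise the two translations between well-founded predicates and inductive basic covers already sketched in the paragraphs just before the statement, and then verify what each notion of encoding demands.

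\textbf{From IBC to WP.} Given axiom set parameters $A, I, C$ together with a subset $V : A \to \mathsf{U_0}$ (resp.\ $V : A \to \mathsf{prop_s}$), instantiate the WP constructor with base set $A$, family $N_V(a) :\equiv V(a) + I(a)$, and family of subsets $R_V(a, n)$ defined by $+$-elimination to be $\emptyset$ on the left summand and $C(a,i)$ on the right summand. Interpret $a \vartriangleleft_{I,C} V$ as $\mathsf{WP}_{R_V}(a)$. Reflexivity and infinity of $\vartriangleleft$ correspond to applications of $\mathsf{WP}$-introduction through the left and right injection respectively (the premise over $R_V(a,\mathsf{inl}(v))$ is vacuously satisfied); the elimination rule of $\vartriangleleft$ follows from $\mathsf{WP}$-elimination by a $+$-case analysis on the name $n : V(a) + I(a)$ that dispatches to the reflexivity- or infinity-premise.

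\textbf{From WP to IBC.} Given WP parameters $I, N, R$, interpret $\mathsf{WP}_R(i)$ as $i \vartriangleleft_{N,R} \emptyset$, where $\emptyset$ denotes the constantly-$\mathsf{N_0}$ subset of $I$. The introduction of WP is derived from the infinity rule of $\vartriangleleft$; the elimination of WP is derived from the elimination of $\vartriangleleft$ by supplying the reflexivity branch vacuously through $\mathsf{N_0}$-elimination, since $\emptyset$ has no inhabitants.

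For part (i), the two definitions above play the role of the candidate type $B$ required by the definition of propositional encoding, and it remains to exhibit a propositional isomorphism $A \cong B$ in the combined theory in each direction. This is done by mutual induction using the eliminators of both constructors, with function extensionality invoked to identify the $\lambda$-abstractions produced along the recursion, exactly as in the corresponding part of Proposition~\ref{W-DW}.

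For part (ii), one must additionally check that the translations validate the \emph{computation} rules on the nose. The main obstacle is that both eliminators are defined recursively through branching functions which, once mediated by $+$-elimination, $\Sigma$-pairing and $\lambda$-abstraction, must still definitionally reduce to the expected expressions: this is precisely what the $\eta$-rules for $\Pi$- and $\Sigma$-types deliver in $\mathbf{MLTT_0}$ and $\mathbf{mTT}$, while in $\mathbf{emTT}$ the same identifications are automatic by extensionality of functions and surjective pairing. The actual calculations are routine and, as with Proposition~\ref{W-DW}(ii), best delegated to the Agda formalization.
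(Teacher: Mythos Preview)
Your two translations are the correct starting point, and for part~(i) the direct interpretation of $a \vartriangleleft_{I,C} V$ as $\mathsf{WP}_{R_V}(a)$ does yield a propositional isomorphism under function extensionality: the only non-trivial round-trip step, $\mathsf{ind}(a,\mathsf{inl}(r),f) = \mathsf{ind}(a,\mathsf{inl}(r),e_0)$ with $e_0 :\equiv \lambda j.\lambda z.\mathsf{El_{N_0}}(z)$, follows because any two inhabitants of $(\Pi j:A)(\mathsf{N_0}\to\mathsf{WP}_{R_V}(j))$ are propositionally equal by funext.

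For part~(ii) in $\mathbf{MLTT_0}$, however, there is a genuine gap in the direction ``WP encodes IBC'', and it is not a routine calculation. The $\mathbf{MLTT_0}$ eliminator for $\vartriangleleft$ has a \emph{proof-relevant} motive $M(a,p)$, so to derive it from $\mathsf{El_{WP}}$ you must construct $c(a,n,f,h):M(a,\mathsf{ind}(a,n,f))$ and then $+$-split on $n$. In the branch $n\equiv\mathsf{inl}(r)$ the argument $f:(\Pi j:A)(\mathsf{N_0}\to\mathsf{WP}_{R_V}(j))$ is an \emph{abstract variable}, and you need a term of type $M(a,\mathsf{ind}(a,\mathsf{inl}(r),f))$; but the only datum available, $q_1(a,r)$, lives in $M(a,\mathsf{rf}(a,r))\equiv M(a,\mathsf{ind}(a,\mathsf{inl}(r),e_0))$. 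These two types coincide only if $f$ and $e_0$ are \emph{definitionally} equal, and the $\eta$-rules for $\Pi$ and $\Sigma$ do not deliver that: $\eta$ for $\Pi$ gives $f\equiv\lambda j.\lambda z.f(j)(z)$, not uniqueness of maps out of $\mathsf{N_0}$. So the case analysis you describe simply does not type-check.

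The paper's proof addresses exactly this point by \emph{not} taking $\mathsf{WP}_{R_V}(a)$ itself as the interpretation, but its refinement $(\Sigma w:\mathsf{WP}_{R_V}(a))\,\mathsf{Canonical}(a,w)$, where $\mathsf{Canonical}$ is defined by recursion on $w$ and, in the $\mathsf{inl}$ case, stores the identity type $f = e_0$. One then $J$-eliminates on that stored equality to replace the abstract $f$ by $e_0$, after which $q_1(a,r)$ has the required type and $\mathsf{C}_{\mathsf{rf}}$ holds on the nose; the $\eta$-rules are used, as in Proposition~\ref{W-DW}(ii), for the $\Sigma$-repackaging and the $\mathsf{inr}$ branch. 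This is precisely the technique of \cite{whyw}, and it is the missing idea in your sketch.
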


\begin{proof}
Showing that inductive basic covers encode well-founded predicates in $\mathbf{MLTT_0}$ is a straightforward adaptation of the construction presented above in the case of the Minimalist Foundation.

The reverse is more involved, since it requires employing the technique presented in \cite{whyw} to correctly encode the reflexivity case of inductive basic covers. We show it only in the case of $\mathbf{MLTT_0}$, the case for $\mathbf{mTT}$ being entirely analogous.

We need to refine the construction with a predicate family

$a : A \vdash \mathsf{Canonical}(a) : \mathsf{WP}_R(a) \to \mathsf{U_0}$ defined by recursion in the following way (we leave implicit the dependence on $a$ of the predicate):
\[
\mathsf{Canonical}(\mathsf{ind}(a, n, f)) :\equiv
\begin{cases}
f =_{(\Pi j : I)(\mathsf{N_0} \to \mathsf{WP}_R(j))} \lambda j.\lambda z.\mathsf{El_{N_0}}(z) & \text{if } n : V(a) \\ 
(\Pi b : A)(\Pi r : R(a,n,b)) \mathsf{Canonical}(f(b, r)) & \text{if } n : I(a)
\end{cases}
\]

Reasoning analogously as in \ref{W-DW}, it is easy to check that the type family
\[
a : A \vdash (\Sigma w : \mathsf{WP}_R(a))\mathsf{Canonical}(w)
\]
is isomorphic to the inductive basic cover $a \vartriangleleft_{I,C} V$ (assuming function extensionality); or that it satisfies the rules of the inductive basic covers constructor (assuming $\eta$-equalities).
\end{proof}

We mentioned how well-founded predicates were added to the Minimalist Foundation as a natural propositional counterpart of dependent W-types. In \mltt\, we can show that the two constructors are indeed equivalent; to achieve this, we combine \ref{W-DW} with the following result.

\begin{proposition}\label{DW-WP}

\begin{enumerate}
\item Dependent W-types propositionally encode well-founded predicates in $\mathbf{MLTT_0}$ extended with function extensionality;
\item Dependent W-types definitionally encode well-founded predicates in $\mathbf{MLTT_0}$ extended with $\eta$-equality for $\Sigma$-types and $\Pi$-types;
\item Well-founded predicates propositionally encode dependent W-types in $\mathbf{MLTT_0}$ extended with function extensionality;
\item Well-founded predicates definitionally encode dependent W-types in $\mathbf{MLTT_0}$ extended with $\eta$-equality for $\Sigma$-types and the unit type $\mathsf{N_1}$.
\end{enumerate}
\end{proposition}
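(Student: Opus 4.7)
The plan is to handle the two encoding directions separately and, within each, to treat the propositional and definitional cases in parallel.

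For items \textit{(i)} and \textit{(ii)}, I would encode a well-founded predicate with parameters $I$, $N$, $R$ by a dependent W-type with $Br(i,n) :\equiv (\Sigma j : I)\,R(i,n,j)$ and $ar(i,n,\langle j, r\rangle) :\equiv j$, setting the encoded $\mathsf{WP}'(i) :\equiv \mathsf{DW}_{Br,ar}(i)$. Under $\Sigma$-$\eta$ every $b : Br(i,n)$ reduces to $\langle \mathsf{pr_1}\,b,\mathsf{pr_2}\,b\rangle$, and under $\Pi$-$\eta$ each $(\Pi b : Br(i,n))\,X$ is freely curried, so that the premise $(\Pi b : Br(i,n))\,\mathsf{DW}(ar(i,n,b))$ of DW's introduction rule becomes definitionally the premise $(\Pi j : I)(R(i,n,j) \to \mathsf{WP}(j))$ of WP's introduction rule; the constructor, eliminator and computation rule translate verbatim, proving \textit{(ii)}. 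Under function extensionality alone, the same construction yields a propositional isomorphism by induction on DW, proving \textit{(i)}.

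For items \textit{(iii)} and \textit{(iv)}, I would encode a dependent W-type with parameters $I$, $N$, $Br$, $ar$ by the well-founded predicate with
\[
R(i, n, j) :\equiv (\Sigma b : Br(i, n))\,\mathsf{Id}(I,\, ar(i, n, b),\, j),
\]
setting the encoded $\mathsf{DW}'(i) :\equiv \mathsf{WP}_R(i)$. The DW constructor is encoded by
\[
\mathsf{dsup}(i, n, f) :\equiv \mathsf{ind}\bigl(i, n,\, \lambda j.\lambda z.\,\mathsf{J}(\mathsf{pr_2}\,z,\, f(\mathsf{pr_1}\,z))\bigr),
\]
where the identity-type eliminator transports $f(\mathsf{pr_1}\,z)$ along $\mathsf{pr_2}\,z$, and the DW eliminator is obtained from WP-elimination by feeding each branch $b$ the canonical pair $\langle b, \mathsf{id}(ar(i,n,b))\rangle$. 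The DW computation rule reduces to the $\mathsf{J}$-computation $\mathsf{J}(\mathsf{id}, x) = x$ on reflexive identities, together with $\Sigma$-$\eta$ to reassemble pairs and $\mathsf{N_1}$-$\eta$ to collapse the unit components appearing in the fibered setup, giving \textit{(iv)}. Under function extensionality alone, the same construction yields a propositional inverse by induction on WP, proving \textit{(iii)}.

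The main obstacle will be the definitional verification in item \textit{(iv)}: whereas the forward direction is essentially a currying/uncurrying of $\Sigma$-types, the reverse must mediate between DW's branching over a $\Sigma$-set and WP's branching over an $I$-indexed predicate, which forces the use of identity-type transports whose coherence does not follow purely symbolically as in \textit{(ii)}. Orchestrating the $\mathsf{J}$-computation together with $\Sigma$- and $\mathsf{N_1}$-$\eta$ to make the required reductions definitional is the delicate step, and is the one most reliably discharged in the Agda formalization referenced in the introduction.
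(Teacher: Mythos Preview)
Your treatment of items \textit{(i)} and \textit{(ii)} coincides with the paper's: the paper also sets $Br(i,n):\equiv(\Sigma j:I)\,R(i,n,j)$ and $ar(i,n):\equiv\mathsf{pr_1}$, and leaves the (un)currying verification implicit.

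For items \textit{(iii)} and \textit{(iv)} you take a genuinely different route. The paper does \emph{not} encode dependent W-types directly by a well-founded predicate. Instead it shows only that well-founded predicates encode \emph{non-dependent} W-types, by choosing $I:\equiv\mathsf{N_1}$, $N$ the constant family at $A$, and $R$ the family $B$ (constant in the $\mathsf{N_1}$-indices), and then setting $\mathsf{W}':\equiv\mathsf{WP}_R(\star)$. The passage to dependent W-types is obtained externally by composing with Proposition~\ref{W-DW}. This is precisely why $\eta$ for $\mathsf{N_1}$ appears among the hypotheses of \textit{(iv)}: it is used to collapse the dummy $\mathsf{N_1}$-indexing introduced by this reduction.

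Your fiber encoding $R(i,n,j):\equiv(\Sigma b:Br(i,n))\,\mathsf{Id}(I,\,ar(i,n,b),\,j)$ is a reasonable alternative for \textit{(iii)}, and with function extensionality the mutual inverses close by induction as you indicate. But your argument for \textit{(iv)} has a gap. No $\mathsf{N_1}$ occurs anywhere in your construction, so the appeal to $\mathsf{N_1}$-$\eta$ ``to collapse the unit components appearing in the fibered setup'' is vacuous. If you trace the definitional computation of your encoded eliminator, the obstruction is that the round-trip sending $f'$ to $\lambda b.\,f'(ar(i,n,b),\langle b,\mathsf{refl}\rangle)$ and back via $\lambda j.\lambda z.\,\mathsf{J}(\mathsf{pr_2}z,\ldots)$ must return $f'$ on the nose for the types to match; this requires a $\Pi$-$\eta$ step (and an identity-type coherence that is not definitional), not $\mathsf{N_1}$-$\eta$. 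Hence your construction does not establish \textit{(iv)} under the stated hypotheses; to match the proposition as written you should replace the direct fiber encoding by the paper's factorization through $\mathsf{W}$ over $\mathsf{N_1}$.
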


\begin{proof}

To show that dependent W-types encode well-founded predicates, assume to have parameters $I,N,R$ and recall that
\[R : (\Pi i : I)(\ N(i) \to (\ I \to \mathsf{U_0}\ ) \ )\] To construct the desired dependent W-type, we take for $I$ and $N$ the parameters with the same names; while, for each $i : I$ and $n : N(i)$, we define
\begin{align*}
Br(i,n) & :\equiv (\Sigma j : I) R(i,n,j)\\
ar(i,n) & :\equiv \mathsf{pr_1} : Br(i,n) \to I
\end{align*}
Without any further intricacies, the type family $i : I \vdash \mathsf{DW}_{(\Sigma j : I)R(i, n, j), \mathsf{pr_1}}$ can be proved to satisfy, under the chosen hypotheses, the corresponding statement as in propositions \ref{W-DW} and \ref{WP-IBC}.

To show that well-founded predicates encode W-types, assume to have the W-type parameters $A : \mathsf{U_0}$ and $B : A \to \mathsf{U_0}$. Firstly, we construct a well-founded predicate choosing for $I$ the unit type $\mathsf{N_1}$, for $N$ the constant type family at $A$, and for $R$ the type family $B$ depending only on the $A$'s indexing. We then instantiate the type family $x : \mathsf{N_1} \vdash \mathsf{WP}_{R}(x) : \mathsf{U_0}$ obtained in this way at the canonical element of the unit type $\star$, obtaining a type $\mathsf{W'} :\equiv \mathsf{WP}_{R}(\star) : \mathsf{U_0}$; it is straightforward to check that this type encodes the W-type constructor.
\end{proof}

\section{Conclusions}

We can combine propositions \ref{W-DW}, \ref{DW-WP} and \ref{WP-IBC}  to show, under the propositions-as-type paradigm, the essential equivalence of all the type constructors previously considered:

\begin{theorem}
The following type constructors are all propositionally mutually encodable in $\mathbf{MLTT_0}$ extended with function extensionality, and definitionally mutually encodable in $\mathbf{MLTT_0}$ extended with $\eta$-equalities for $\Pi$-types, $\Sigma$-types and $\mathsf{N_1}$:
\begin{enumerate}
\item W-types;
\item dependent W-types;
\item well-founded predicates;
\item inductive basic covers.
\end{enumerate}
\end{theorem}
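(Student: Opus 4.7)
The plan is to assemble the theorem as a cycle of encodings built from the three propositions already proved, namely \ref{W-DW}, \ref{DW-WP} and \ref{WP-IBC}, and check that both notions of mutual encodability compose. First I would establish transitivity of each flavour of encoding: if $\mathsf{A}$ encodes $\mathsf{B}$ and $\mathsf{B}$ encodes $\mathsf{C}$ (in the same sense), then $\mathsf{A}$ encodes $\mathsf{C}$. In the definitional case this is immediate by composing the interpretations of symbols, since validity of the rules of $\mathsf{C}$ is preserved when the intermediate interpretation of $\mathsf{B}$ is substituted inside the interpretation that $\mathbf{T}+\mathsf{A}$ provides for $\mathsf{B}$. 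In the propositional case it amounts to composing two witnessed isomorphisms into a single one: the inverse laws of the composite reduce, via $\mathsf{Id}$-elimination and function extensionality, to those of the components.

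Once transitivity is in place, the theorem follows by arranging the available pairwise encodings into the chain
\[
\text{W-types} \longleftrightarrow \text{dependent W-types} \longleftrightarrow \text{well-founded predicates} \longleftrightarrow \text{inductive basic covers}.
\]
For the propositional statement each of the three links is already covered by function extensionality alone, so no additional hypotheses accumulate along the cycle. For the definitional statement I would track the hypotheses required at each link---$\eta$ for $\Pi$ and $\Sigma$ from \ref{W-DW}; $\eta$ for $\Pi$, $\Sigma$ and $\mathsf{N_1}$ from \ref{DW-WP}; and $\eta$ for $\Pi$ and $\Sigma$ from \ref{WP-IBC}---and take their union, which is precisely $\eta$ for $\Pi$, $\Sigma$ and $\mathsf{N_1}$, matching the statement.

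The one piece of genuine bookkeeping is that \ref{W-DW} explicitly produces only the direction from W-types to dependent W-types; the reverse is obtained for free by specialising the index parameter $I$ to the singleton type $\mathsf{N_1}$, and this specialisation introduces no extra equality assumptions. Aside from this observation and from the transitivity argument, the proof is purely a matter of composing the established results, and the main (quite mild) obstacle is simply to confirm that the hypotheses needed by the composite encoding are indeed just the union of those needed by the individual links, with no hidden interactions arising when one interpretation is nested inside another.
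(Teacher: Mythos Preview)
Your proposal is correct and takes the same approach as the paper, which likewise just says the theorem follows by combining Propositions \ref{W-DW}, \ref{DW-WP} and \ref{WP-IBC}; your explicit accounting of the chain, of the hypotheses accumulated at each link, and of the free reverse direction $\mathsf{DW}\to\mathsf{W}$ via $I:\equiv\mathsf{N_1}$ is exactly what the paper leaves implicit.

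One point deserves a little more care than you give it. For \emph{propositional} encoding, transitivity is not literally ``compose two witnessed isomorphisms'': if $\mathsf{C}$ encodes $\mathsf{D}$ and $\mathsf{D}$ encodes $\mathsf{E}$, the composite $A\cong B\cong B'$ passes through a $\mathsf{D}$-type $B$, so it is a priori a term of $\mathbf{T}+\mathsf{C}+\mathsf{D}+\mathsf{E}$, whereas the definition asks for the isomorphism in $\mathbf{T}+\mathsf{C}+\mathsf{E}$ only. Here this is harmless because each intermediate constructor is given by an explicit construction in terms of the endpoint one (e.g.\ $\mathsf{DW}$ is built from $\mathsf{W}$ as $\Sigma\,\mathsf{Free}\,\mathsf{Legal}$), so one can inline that construction and rerun the relevant isomorphism proof inside $\mathbf{T}+\mathsf{C}+\mathsf{E}$; but that inlining step is what actually does the work, not mere composition. (Relatedly, composing two type equivalences requires only $\mathsf{Id}$-elimination; function extensionality plays no role there.) The paper does not spell this out either, so you are at the same level of rigour as the source.
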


In particular, we have the following corollaries: 

\begin{corollary}
\begin{enumerate}
\item W-types and inductive basic covers are propositionally mutually encodable in Homotopy Type Theory;
\item W-types and inductive basic covers are definitionally mutually encodable in the Agda implementation of intensional Martin-Löf's type theory.
\end{enumerate}
\end{corollary}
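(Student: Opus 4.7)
The plan is to obtain the Corollary as a direct instantiation of the preceding Theorem, checking that each of the two ambient theories of interest already satisfies the extensional hypotheses assumed there. No new constructions are needed: the argument reduces to verifying which extensional principles each target system validates, and then appealing to the appropriate clause of the Theorem specialised to items (1) and (4) of its list.

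For part (i), I would first recall that Homotopy Type Theory extends $\mathbf{MLTT_0}$ with the univalence axiom, from which function extensionality is derivable (see \cite{hottbook}). This discharges the hypothesis of the first clause of the Theorem, yielding the propositional mutual encoding of W-types and inductive basic covers in HoTT. For part (ii), I would observe that the Agda proof assistant implements, by default, definitional $\eta$-equalities for $\Pi$-types, $\Sigma$-types, and the singleton type $\mathsf{N_1}$; together with the other constructors of $\mathbf{MLTT_0}$, all available in Agda, this discharges the hypothesis of the second clause of the Theorem, and the definitional mutual encoding follows.

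The only point requiring care is to ensure that the further ingredients present in HoTT or in Agda beyond those of $\mathbf{MLTT_0}$ — such as additional universes, higher inductive types, or pattern matching — do not interfere with the encoding. This is harmless: the Theorem is stated for \emph{arbitrary} extensions of $\mathbf{MLTT_0}$, and the encodings of Propositions \ref{W-DW}, \ref{DW-WP} and \ref{WP-IBC} remain valid under the addition of further rules, since none of those proofs depend on the absence of extra structure. Consequently, all the genuinely mathematical content has already been expended in the earlier propositions, and the Corollary amounts to a bookkeeping step matching hypotheses to the two implementations.
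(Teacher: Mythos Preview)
Your proposal is correct and matches the paper's intended reasoning: the paper gives no explicit proof of the corollary, treating it as an immediate instantiation of the preceding Theorem once one notes that HoTT validates function extensionality and that Agda's default implementation validates the required definitional $\eta$-equalities. Your write-up simply makes this justification explicit, including the harmless observation that further ambient structure does not invalidate the derivations underlying Propositions~\ref{W-DW}, \ref{DW-WP} and~\ref{WP-IBC}.
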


Finally, combining \ref{now} and \ref{WP-IBC}, we get a corresponding result for the intensional level of the Minimalist Foundation extended with $\eta$-equalities for $\Sigma$- and $\Pi$-types, and in $\mathbf{emTT}$, which already satisfies $\eta$-equalities. Clearly, the equivalence is no longer extended to all four type constructors since propositions and types are no longer identified:

\begin{theorem}
In $\mathbf{mTT}$ extended with $\eta$-equalities for $\Sigma$- and $\Pi$-types, and in $\mathbf{emTT}$ 
the followings hold:
\begin{enumerate}
\item W-types and dependent W-types are definitionally mutually encodable;
\item basic inductive covers and well-founded predicates are definitionally mutually encodable.
\end{enumerate}
\end{theorem}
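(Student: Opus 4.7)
The plan is to handle the two parts independently. Part~(2) should follow at once from Proposition~\ref{WP-IBC}(ii), since that proposition already establishes definitional mutual encodability of basic inductive covers and well-founded predicates in exactly the two theories mentioned in the statement; no further work is needed there.

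For part~(1), I would split the mutual encoding into its two halves. The direction from dependent W-types to W-types is the easy one, already observed in the discussion preceding Proposition~\ref{W-DW}: given W-type parameters $A$ and $B(a)$, instantiate the parameters of $\mathsf{DW}$ by $I :\equiv \mathsf{N_1}$, $N(\star) :\equiv A$, $Br(\star,a) :\equiv B(a)$, with $ar(\star,a)$ the constant function at $\star$, and define $\mathsf{W}_{A,B} :\equiv \mathsf{DW}_{Br,ar}(\star)$. Because $ar$ is definitionally constant, the index $ar(\star,a,b)$ reduces to $\star$ without appeal to any $\eta$-rule for $\mathsf{N_1}$, so the introduction and computation rules of $\mathsf{W}$ follow directly from those of $\mathsf{DW}$. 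For the converse direction I would rerun the construction of Proposition~\ref{W-DW}(ii) inside the Minimalist Foundation: one forms $\mathsf{Free} :\equiv \mathsf{W}_{(\Sigma i:I)N(i),\,Br(\mathsf{pr_1}(z),\mathsf{pr_2}(z))}$, defines the predicate $\mathsf{Legal}(i,w)$ by recursion on $w$ exactly as in that proof, and sets $\mathsf{DW}'(i) :\equiv (\Sigma w : \mathsf{Free})\,\mathsf{Legal}(i,w)$. All type-formers involved stay within $\mathsf{set}$: in $\mathbf{mTT}$, $\Sigma$ and $\Pi$ of sets yield sets, $\mathsf{Id}$ lives in $\mathsf{prop_s} \subseteq \mathsf{set}$, and the W-type elimination acts toward sets and hence toward $\mathsf{prop_s}$, so $\mathsf{Legal}$ is a well-defined small-proposition-valued predicate. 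The definitional equalities for $\mathsf{dsup}'$ and $\mathsf{El'_{DW}}$ are then verified exactly as in Proposition~\ref{W-DW}(ii), with the $\eta$-rules for $\Sigma$ and $\Pi$ playing the same role in $\mathbf{mTT}$, while in $\mathbf{emTT}$ these $\eta$-rules hold automatically.

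The main obstacle is conceptual rather than technical, and is the one already isolated in Remark~\ref{now}: for the encoding to be meaningful in the Minimalist Foundation one must first ensure that W-types can actually be formed on the relevant sets, which strictly speaking requires working in a suitable extension (for instance one equipped with a universe of sets). Once this is granted, the construction of Proposition~\ref{W-DW}(ii) transfers verbatim, since it only invokes set constructors available in $\mathbf{mTT}$ and $\mathbf{emTT}$ together with the $\eta$-rules explicitly assumed in the statement.
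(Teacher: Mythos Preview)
Your proposal is correct and follows the paper's approach exactly: the paper's proof consists of the single sentence ``combining \ref{now} and \ref{WP-IBC}'' preceding the theorem, and you spell out precisely these two ingredients---Proposition~\ref{WP-IBC}(ii) for part~(2), and the transfer of Proposition~\ref{W-DW}(ii) to $\mathbf{mTT}$ asserted in Remark~\ref{now} for part~(1), including the caveat about Remark~\ref{now} that W-types need a suitable extension to be usable. One minor point: your justification that ``W-type elimination acts toward sets and hence toward $\mathsf{prop_s}$'' conflates the universe $\mathsf{prop_s}$ (a collection) with its inhabitants; defining $\mathsf{Legal}$ by recursion requires eliminating into the \emph{collection} $\mathsf{prop_s}$, which is exactly the subtlety the paper also leaves implicit.
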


\paragraph{Conclusions and future work}
We have shown that in Martin-Löf’s type theory, an axiomatic treatment of inductive topology is not less powerful than the addition of W-types. In the Minimalist Foundation, this is true when W-types are replaced by well-founded predicates, given that the identification of all types as propositions is no longer valid.
In future works, we hope to extend the comparison to the topological positivity relation defined in \cite{coind} with suitable coinductive predicates.

\bibliography{main}{}
\bibliographystyle{entics}

\appendix \section{Rules of type constructors}

\paragraph{Rules for well-founded trees in Martin-Löf's type theory}
\label{wMLTT}

$$
\frac{A : \mathsf{U_0} \qquad
x : A \vdash B(x) : \mathsf{U_0}}{\mathsf{W}_{A,B} : \mathsf{U_0}}
\text{ F-}\textsf{W}
$$
\\
$$
\frac
{a : A \qquad f : B(a) \to \mathsf{W}_{A,B}}
{\mathsf{sup}(a,f) : \mathsf{W}_{A,B}}
\text{ I-}\textsf{W}
$$
\\
$$
\frac{
\begin{aligned}
& w : \mathsf{W}_{A,B} \vdash M(w) \set \qquad w : \mathsf{W}_{A,B}
\\
& a : A \,, f : B(a) \to \mathsf{W}_{A,B} \,, h : (\Pi b : B(a))M(f(b)) \vdash d(a,f,h) : M(\mathsf{sup}(a,f))
\end{aligned}}
{\mathsf{El_W}(w,d) : M(w)}
\text{ E-}\textsf{W}
$$
\\
$$
\frac{
\begin{aligned}
& w : \mathsf{W}_{A,B} \vdash M(w) \set
\qquad
a : A \qquad f : B(a) \to \mathsf{W}_{A,B}
\\
& a : A \,, f : B(a) \to \mathsf{W}_{A,B} \,, h : (\Pi b : B(a))M(f(b)) \vdash d(a,f,h) : M(\mathsf{sup}(a,f))
\end{aligned}}
{\mathsf{El_W}(\mathsf{sup}(a,f),d) = d(a,f,\lambda b.\mathsf{El_W}(f(b),d)) : M(\mathsf{sup}(a,f))}
\text{ C-}\textsf{W}
$$

\paragraph{Rules for dependent well-founded trees in Martin-Löf's type theory}
\label{dwMLTT}

$$
\frac{\begin{aligned}
& I  : \mathsf{U_0} \\
& i : I \vdash N(i)  : \mathsf{U_0} \\
& i : I \,, n : N(i) \vdash Br(i,n)  : \mathsf{U_0} \\
& i : I \,, n : N(i) \vdash ar(i,n) : Br(i,n) \to I
\end{aligned}
}{\mathsf{DW}_{Br,ar} : I \to \mathsf{U_0}}
\text{ F-}\textsf{DW}
$$
\\
$$
\frac{\displaystyle
i : I \qquad n : N(i) \qquad f : (\Pi b : Br(i,n))\mathsf{DW}_{Br,ar}(ar(i,n,b))
}{\mathsf{dsup}(i,n,f) : \mathsf{DW}_{Br,ar}(i)}\text{ I-}\textsf{DW}
$$
\\
$$
\frac{
\begin{aligned}
& i : I \,, w : \mathsf{DW}_{Br,ar}(i) \vdash M(i,w) : \mathsf{U_0}
\\
& i : I \qquad w : \mathsf{DW}_{Br,ar}(i) \\
& i : I , \\
& \quad n : N(i) , \\
& \quad\quad f : (\Pi b : Br(i,n))\mathsf{DW}_{Br,ar}(ar(i,n,b)) , \\
& \quad\quad\quad h : (\Pi b : Br(i,n))M(ar(i,n,b),f(b)) \\
& \quad\quad\quad\quad\vdash d(i,n,f,h) : M(i,\mathsf{dsup}(i,n,f))
\end{aligned}}
{\mathsf{El_{DW}}(i,w,d) : M(i,w)}
\text{ E-}\textsf{DW}
$$
\\
$$
\frac{
\begin{aligned}
& i : I \,, w : \mathsf{DW}_{Br,ar}(i) \vdash M(i,w) : \mathsf{U_0}
\\
& i : I \qquad n : N(i) \qquad f : (\Pi_ b : Br(i,n))\mathsf{DW}_{Br,ar}(ar(i,n,b)) \\
& i : I , \\
& \quad n : N(i) , \\
& \quad\quad f : (\Pi b : Br(i,n))\mathsf{DW}_{Br,ar}(ar(i,n,b)) , \\
& \quad\quad\quad h : (\Pi b : Br(i,n))M(ar(i,n,b),f(b)) \\
& \quad\quad\quad\quad\vdash d(i,n,f,h) : M(i,\mathsf{dsup}(i,n,f))
\end{aligned}}
{\mathsf{El_{DW}}(i,\mathsf{dsup}(i,n,f),d) = d(i,n,f,\lambda b.\mathsf{El_{DW}}(ar(i,n,b),f(b),d)) : M(i,\mathsf{dsup}(i,n,f))}
\text{ C-}\textsf{DW}
$$

\paragraph{Rules for inductive basic covers in $\mathbf{mTT}$}
\label{ibc}

$$
\frac{\begin{aligned}
& A \set \\
& a : A \vdash I(a) \set \\
& a : A \,, i : I(a) \vdash C(a,i) : A \to \mathsf{prop_s} \\
& V : A \to \mathsf{prop_s}
\end{aligned}}{a : A \vdash a \vartriangleleft_{I,C} V \props}
\;\text{ F -}\vartriangleleft
$$
\\
$$
\frac{a : A \qquad r : V(a)}
{\mathsf{rf}(a,r) : a \vartriangleleft_{I,C} V}
\;\text{I}_{\mathsf{rf}}\text{ -}\vartriangleleft
$$
\\
$$
\frac{a : A \qquad i : I(a) \qquad r : (\forall b : A)(C(a,i,b) \Rightarrow b \vartriangleleft_{I,C} V)}
{\mathsf{tr}(a,i,r) : a \vartriangleleft_{I,C} V}
\;\text{I}_{\mathsf{tr}}\text{ -}\vartriangleleft
$$
\\
$$
\frac{
\begin{aligned}
& a : A \vdash P(a) \propa \qquad a : A \qquad p : a \vartriangleleft_{I,C} V
\\
 & a : A \,, r : V(a) \vdash q_1(a,r) : P(a)
 \\
 & a : A\,, i : I(a)\,, s : (\forall b : A)(C(a,i,b) \Rightarrow P(b)) \vdash q_2(a,i,s) : P(a)
   \end{aligned}}{\mathsf{El}_{\vartriangleleft}(p,q_1,q_2) : P(a)}
\;\text{ E -}\vartriangleleft
$$
\\
$$
\frac{
\begin{aligned}
& a : A \vdash P(a) \propa \qquad a : A \qquad r : V(a)
\\
 & a : A \,, r : V(a) \vdash q_1(a,r) : P(a)
 \\
 & a : A\,, i : I(a)\,, s : (\forall b : A)(C(a,i,b) \Rightarrow P(b)) \vdash q_2(a,i,s) : P(a)
   \end{aligned}}{\mathsf{El}_{\vartriangleleft}(\mathsf{rf}(a,r),q_1,q_2) = q_1(a,r) : P(a)}
\;\text{C}_{\mathsf{rf}}\text{ -}\vartriangleleft
$$
\\
$$
\frac{
\begin{aligned}
& a : A \vdash P(a) \propa \\
& a : A \qquad i : I(a) \qquad r : (\forall b : A)(C(a,i,b) \Rightarrow b \vartriangleleft_{I,C} V) 
\\
 & a : A \,, r : V(a) \vdash q_1(a,r) : P(a)
 \\
 & a : A\,, i : I(a)\,, s : (\forall b : A)(C(a,i,b) \Rightarrow P(b)) \vdash q_2(a,i,s) : P(a)
   \end{aligned}}{\mathsf{El}_{\vartriangleleft}(\mathsf{tr}(a,i,r),q_1,q_2) = q_2(a,i,\lambda_{\forall} b.\lambda_{\Rightarrow} t.\mathsf{El}_{\vartriangleleft}(r(b,t),q_1,q_2)) : P(a)}
\;\text{C}_{\mathsf{tr}}\text{ -}\vartriangleleft
$$

\paragraph{Rules for inductive basic covers in Martin-Löf's type theory}
\label{ibcMLTT}

$$
\frac{\begin{aligned}
& A : \mathsf{U_0} \\
& a : A \vdash I(a) : \mathsf{U_0} \\
& a : A \,, i : I(a) \vdash C(a,i) : A \to \mathsf{U_0} \\
& V : A \to \mathsf{U_0}
\end{aligned}}{a : A \vdash a \vartriangleleft_{I,C} V : \mathsf{U_0}}
\;\text{ F-}\textsf{$\vartriangleleft$}
$$
\\

$$
\frac{a : A \qquad r : V(a)}
{\mathsf{rf}(a,r) : a \vartriangleleft_{I,C} V}
\;\textsf{I\textsubscript{rf} -$\vartriangleleft$}
$$
\\

$$
\frac{a : A \qquad i : I(a) \qquad r : (\Pi b : A)(C(a,i,b) \to b \vartriangleleft_{I,C} V)}
{\mathsf{tr}(a,i,r) : a \vartriangleleft_{I,C} V}
\;\text{I - }\vartriangleleft
$$
\\

$$
\frac{
\begin{aligned}
& a : A \,, p : a \vartriangleleft_{I,C} V \vdash M(a,p) \set \\
& a : A \qquad p : a \vartriangleleft_{I,C} V \\
& a : A \,, r : V(a) \vdash q_1(a,r) : M(a,\mathsf{rf}(a,r)) \\
& a : A\,, i : I(a)\,, \\
& \quad r : (\Pi b : A)(C(a,i,b) \to b \vartriangleleft_{I,C} V) \,, \\
& \quad \quad h : (\Pi b : A)(\Pi s : C(a, i, b))M(b,r(b,s)) \\
& \quad \quad \quad \vdash q_2(a,i,r,h) : M(a,\mathsf{tr}(a,i,r))
   \end{aligned}}{\mathsf{El}_{\vartriangleleft}(p,q_1,q_2) : M(a,p)}
\;\text{ E -}\vartriangleleft
$$
\\

$$
\frac{
\begin{aligned}
& a : A \,, p : a \vartriangleleft_{I,C} V \vdash M(a,p) \set \\
& a : A \qquad r : V(a)  \\
& a : A \,, r : V(a) \vdash q_1(a,r) : M(a,\mathsf{rf}(a,r)) \\
& a : A\,, i : I(a)\,, \\
& \quad r : (\Pi b : A)(C(a,i,b) \to b \vartriangleleft_{I,C} V) \,, \\
& \quad \quad h : (\Pi b : A)(\Pi s : C(a, i, b))M(b,r(b,s)) \\
& \quad \quad \quad \vdash q_2(a,i,r,h) : M(a,\mathsf{tr}(a,i,r))
   \end{aligned}}{\mathsf{El}_{\vartriangleleft}(\mathsf{rf}(a,r),q_1,q_2) = q_1(a,r) : M(a,\mathsf{rf}(a,r))}
\;\text{ C}_{\mathsf{rf}}\text{ -}\vartriangleleft
$$
\\

$$
\frac{
\begin{aligned}
& a : A \,, p : a \vartriangleleft_{I,C} V \vdash M(a,p) \set \\
& a : A \qquad i : N(a) \qquad r : (\Pi b : A)(C(a,i,b) \to b \vartriangleleft_{I,C} V) \\
& a : A \,, r : V(a) \vdash q_1(a,r) : M(a,\mathsf{rf}(a,r)) \\
& a : A\,, i : I(a)\,, \\
& \quad r : (\Pi b : A)(C(a,i,b) \to b \vartriangleleft_{I,C} V) \,, \\
& \quad \quad h : (\Pi b : A)(\Pi s : C(a, i, b))M(b,r(b,s)) \\
& \quad \quad \quad \vdash q_2(a,i,r,h) : M(a,\mathsf{tr}(a,i,r))
   \end{aligned}}{\mathsf{El}_{\vartriangleleft}(\mathsf{tr}(a,i,r),q_1,q_2) = q_2(a, i, r, \lambda b.\lambda p. \mathsf{El}_{\vartriangleleft}(r(b, p), q_1, q_2) : M(a,\mathsf{tr}(a,i,r))}
\;\text{ C}_{\mathsf{tr}}\text{ -}\vartriangleleft
$$

\paragraph{Rules for well-founded predicates in $\mathbf{mTT}$}\label{wp}

$$
\frac{\begin{aligned}
& I \set \\
& i : I \vdash N(i) \set \\
& i : I \,, n : N(i) \vdash R(i,n) : I \to \mathsf{prop_s}
\end{aligned}}{
i : I \vdash \mathsf{WP}_{R}(i) \props
}
\text{ F-}\textsf{WP}
$$
\\
$$
\frac{i : I \qquad n : N(i) \qquad r : (\forall j : I)(R(i,n,j) \Rightarrow \mathsf{WP}_{R}(j))}
{\mathsf{ind}(i,n,r) : \mathsf{WP}_{R}(i)}
\;\text{ I-}\textsf{WP}
$$
\\
$$
\frac{
\begin{aligned}
& i : I \vdash P(i) \propa \qquad i : I \qquad p : \mathsf{WP}_{R}(i)
 \\
 & i : I, n : N(i), s : (\forall j : I)(R(i,n,j) \Rightarrow P(j)) \vdash c(i,n,s) : P(i)
   \end{aligned}}{\mathsf{El}_{\mathsf{WP}}(i,p,c) : P(i)}
 \text{ E-}\textsf{WP}
$$
\\
$$
\frac{
\begin{aligned}
& i : I \vdash P(i) \propa \\
& i : I \qquad n : N(i) \qquad r : (\forall j : I)(R(i,n,j) \Rightarrow \mathsf{WP}_{R}(j)) 
 \\
 & i : I, n : N(i), s : (\forall j : I)(R(i,n,j) \Rightarrow P(j))\vdash c(i,n,s) : P(i)
   \end{aligned}}
{\mathsf{El}_{\mathsf{WP}}(i,\mathsf{ind}(i,n,r),c) = c(i,n,\lambda_{\forall} j.\lambda_{\Rightarrow} s.\mathsf{El}_{\mathsf{WP}}(j,r(j,s),c)) : P(i)}
 \text{ C-}\textsf{WP}
$$

\paragraph{Rules for well-founded predicates in Martin-Löf's type theory}
\label{wpMLTT}

$$
\frac{\begin{aligned}
& I : \mathsf{U_0} \\
& i : I \vdash N(i) : \mathsf{U_0} \\
& i : I \,, n : N(i) \vdash R(i,n) : I \to \mathsf{U_0}
\end{aligned}}{
i : I \vdash \mathsf{WP}_{R}(i) : \mathsf{U_0}
}
\text{ F-}\textsf{WP}
$$
\\
$$
\frac{i : I \qquad n : N(i) \qquad f : (\Pi j : I)(R(i,n,j) \to \mathsf{WP}_R(j))}
{\mathsf{ind}(i,n,f) : \mathsf{WP}_{R}(i)}
\;\text{ I-}\textsf{WP}
$$
\\
$$
\frac{
\begin{aligned}
& i : I , w : \mathsf{WP}_{R}(i) \vdash M(i,w) \set \qquad i : I \qquad w : \mathsf{WP}_{R}(i)
 \\
& i: I, \\
& \quad n : N(i), \\
& \quad\quad f : (\Pi j : I)(R(i,n,j) \to \mathsf{WP}_R(j)) , \\
& \quad\quad\quad h : (\Pi j : I)(\Pi r : R(i,n,j))M(j, f(j, r)) \\
& \quad\quad\quad\quad\vdash c(i,n,f,h) : M(i,\mathsf{ind}(i,n,f))
   \end{aligned}}{\mathsf{El}_{\mathsf{WP}}(i,w,c) : M(i,w)}
 \text{ E-}\textsf{WP}
$$
\\
$$
\frac{
\begin{aligned}
& i : I , w : \mathsf{WP}_{R}(i) \vdash M(i,w) \set \\
&i : I \qquad n : N(i) \qquad f : (\Pi j : I)(R(i,n,j) \to \mathsf{WP}_R(j))\\
& i: I, \\
& \quad n : N(i), \\
& \quad\quad f : (\Pi j : I)(R(i,n,j) \to \mathsf{WP}_R(j)) , \\
& \quad\quad\quad h : (\Pi j : I)(\Pi r : R(i,n,j))M(j, f(j, r)) \\
& \quad\quad\quad\quad\vdash c(i,n,f,h) : M(i,\mathsf{ind}(i,n,f))
   \end{aligned}}
{\mathsf{El_{WP}}(i, \mathsf{ind}(i, n, f), c) = c(i, n, f, \lambda j.\lambda r. \mathsf{El_{WP}}(c, j, f(j, r)) : M(i,\mathsf{ind}(i,n,f))}
 \text{ C-}\textsf{WP}
$$
\end{document}